\theoremstyle{plain}
\newtheorem{theorem}{Theorem}[section]
\newtheorem{corollary}[theorem]{Corollary}
\newtheorem{lemma}[theorem]{Lemma}
\theoremstyle{definition}
\newtheorem{definition}[theorem]{Definition}
\newtheorem{note}[theorem]{Note}
\newtheorem{remark}[theorem]{Remark}
\theoremstyle{remark}
\newcommand{\Abs}[1]{[#1]}
\newcommand{\abs}[1]{\langle#1\rangle}
\newcommand{\act}{\cdot}
\newcommand{\Atom}{\mathbb{A}}
\newcommand{\bij}{\cong}
\newcommand{\cat}[1][C]{\mathbf{#1}}
\newcommand{\comp}{\circ}
\newcommand{\defeq}{\triangleq}
\newcommand{\freshfor}{\mathrel{\#}}
\newcommand{\funfs}[1][]{\mathbin{\shortrightarrow_{\mathrm{fs}}^{#1}}}
\newcommand{\id}{\mathrm{id}}
\newcommand{\imp}{\mathrel{\Rightarrow}}
\newcommand{\morphism}{\rightarrow}
\newcommand{\Nom}[1][]{{\mathbf{Nom}_{#1}}}
\newcommand{\op}{\mathrm{op}}
\newcommand{\Perm}{\textstyle\mathop{\mathrm{Perm}}}
\newcommand{\Powfs}[1][]{\mathrm{P}_{\mathrm{fs}}^{#1}}
\newcommand{\Sch}{\mathbf{Sch}}
\newcommand{\Set}[1][]{\mathbf{Set}_{#1}}
\newcommand{\supp}{\mathop{\mathrm{supp}}}
\newcommand{\swap}[2]{(#1\;#2)}
\newcommand{\TO}{\shortrightarrow}
\newcommand{\Cbox}{\Box}
\newcommand{\FillI}{{\uparrow}}
\newcommand{\FillO}{{\downarrow}}
\newcommand{\OboxI}{\sqcup}
\newcommand{\OboxO}{\sqcap}
\newcommand{\OISub}{\ensuremath{\mathbf{01Sub}}}
\title{An Equivalent Presentation of\\ 
  the Bezem-Coquand-Huber\\
  Category of Cubical Sets}
\author{Andrew Pitts\\
  University of Cambridge Computer Laboratory}
\date{17 September 2013 [updated 21 December 2013]}
\begin{document}

\maketitle

\begin{abstract}
  Staton has shown that there is an equivalence between the category
  of presheaves on (the opposite of) finite sets and partial
  bijections and the category of nominal restriction sets:
  see~\cite[Exercise~9.7]{PittsAM:nomsns}. The aim here is to see that
  this extends to an equivalence between the category of cubical sets
  introduced in \cite{CoquandT:modttc} and a category of nominal sets
  equipped with a `$01$-substitution' operation. It seems to me that
  presenting the topos in question equivalently as $01$-substitution
  sets rather than cubical sets will make it easier (and more elegant)
  to carry out the constructions and calculations needed to build the
  intended univalent model of intentional constructive type theory.
\end{abstract}

\section{Nominal sets}
\label{sec:noms}

I will use notation for nominal sets as in \cite{PittsAM:nomsns}. In
particular:
\begin{itemize}
\item $\Nom$ is the category of nominal sets and equivariant functions
  over a countably infinite set of names $\Atom$.

\item $\Perm\Atom$ is the group of finite permutations of the
  countably infinite set $\Atom$.

\item $\Atom$ also denotes the nominal set of names (permutation
  action: $\pi\act a = \pi\,a$).

\item $2\defeq\{0,1\}$ is the discrete nominal set with two elements
  (trivial permutation action: $\pi\act i = i$).

\item $\supp x$ denotes the smallest finite subset of $\Atom$ that
  supports an element $x$ of a nominal set.

\item $a\freshfor x\;[a\in\Atom, x\in X]$ is the freshness relation
  associated with $X\in\Nom$ (which holds by definition iff
  $a\notin\supp x$).

\item $\Abs{\Atom}X$ is the nominal set of name abstractions
  $\abs{a}x$ of elements $x\in X$ of a nominal set $X$: see
  \cite[chapter~4]{PittsAM:nomsns}. 

\item $\Powfs X$ is the nominal set of finitely supported subsets of a
  nominal set $X$; see \cite[Defintion~2.26]{PittsAM:nomsns}.
\end{itemize}

\section{$01$-Substitution operations}

Let $X$ be a nominal set. A \emph{$01$-substitution operation} on $X$
is a morphism
\[
s\in\Nom(X\times\Atom\times 2, X)
\]
satisfying the following properties, where we write $x(a:=i)$ for
$s(x,a,i)$:
\begin{gather}
  a\freshfor x(a:=i)\label{eq:1} \\
  a\freshfor x \;\imp\; x(a:=i) = x \label{eq:2}\\
  a\freshfor a' \;\imp\; x(a:=i)(a':=i') = x(a':=i')(a:=i)\label{eq:3}
\end{gather}
Note that since $s$ is a morphism is $\Nom$, we also have
\begin{equation}
  \label{eq:4}
  \pi\act(x(a:=i)) = (\pi\act x)(\pi\,a :=i)
\end{equation}
for all $\pi\in\Perm\Atom$. 

\begin{remark}
  Property \eqref{eq:1} tells us that $s$ 
  corresponds to a pair of morphisms in $\Nom(\Abs{\Atom}X,X)$
  \[
  \abs{a}x \mapsto x(a := 0) \quad\text{and}\quad \abs{a}x \mapsto x(a:= 1)
  \]
  and the other two properties imply that these are in fact name
  restriction operations in the sense of
  \cite[section~9.1]{PittsAM:nomsns}.
\end{remark}

\begin{definition}[\textbf{the category of $01$-substitution sets}]
  The category $\OISub$ has objects that are nominal sets equipped
  with a $01$-substitution operation and morphisms $f\in\OISub(X,Y)$ that are
  equivariant functions $f\in\Nom(X,Y)$ preserving the  $01$-substitution
  operation:
  \begin{equation}
    \label{eq:10}
    f(x(a := i)) = (f\,x)(a := i)\,.
  \end{equation}
  Composition and identities are as for ordinary functions. (Note that
  \eqref{eq:10} makes sense from the point of view of a property of
  substitution, only because $f$ is equivariant, which is to say that
  it has empty support as a member of the exponential object $X\funfs
  Y$ in $\Nom$, that is, $(\forall a\in\Atom)\; a\freshfor f \in
  X\funfs Y$.)
\end{definition}

\section{Cubical sets}

Let $\cat$ be the small category whose objects $A$ are finite subsets
of $\Atom$ and whose morphisms $f\in\cat(A,B)$ are functions
$f\in\Set(A,B+2)$ satisfying
\begin{equation}
  \label{eq:5}
  (\forall a,a'\in f^{-1}B)\; f\,a=f\,a' \;\imp a=a'\,.
\end{equation}
The identity morphism $\id_A\in\cat(A,A)$ is the inclusion function
$A\hookrightarrow A+2$:
\begin{equation}
  \label{eq:6}
  (\forall a\in A)\;\id_A\,a = a
\end{equation}
and the composition of $f\in\cat(A,B)$ with $g\in\cat(B,C)$ is $g\comp
f \in\Set(A,C+2)$ given by:
\begin{equation}
  (\forall a\in A)\; (g\comp f)\,a =
  \begin{cases}
    g(f\,a) &\text{if $a\in f^{-1}B$}\\
    f\,a    &\text{if $a\in A-f^{-1}B$.}
  \end{cases}\label{eq:7} 
\end{equation}
The \emph{category of cubical sets} is the category $[\cat,\Set]$ of
presheaves on $\cat^{\op}$. 


\section{$\OISub$ and $[\cat,\Set]$ are equivalent categories}

Let $\cat[I]$ be the subcategory of $\cat$ with the same objects, but
whose morphisms are those $f\in\cat(A,B)$ satisfying $f^{-1}B=A$; in
other words, $\cat[I](A,B)$ consists of all injective functions from
$A$ to $B$. The category $\cat[I]$ has all pullbacks, created by the
inclusion of $\cat[I]$ into $\Set$. The full subcategory of
$[\cat[I],Set]$ consisting of pullback-preserving functors is one
presentation of the \emph{Schanuel topos} and is in particular
equivalent to $\Nom$. Section~6.3 of \cite{PittsAM:nomsns} contains a
detailed account of this equivalence, which I will make use of here.

When we restrict a functor $F\in[\cat,\Set]$ along the inclusion $i:
\cat[I]\morphism \cat$ we get a pullback-preserving functor
$i^*F=F\comp i: \cat[I]\morphism \Set$ because of the following
elementary piece of category theory (this was observed by Staton and
Levy for the category of finite sets and partial bijections, but works
just the same for $\cat$):

\begin{lemma}
  In any category, suppose 
  \begin{equation}
    \label{eq:11}
    \begin{gathered}
      \xymatrix{%
        D \ar[r]^p\ar[d]_q& A \ar[d]^f\\
        B \ar[r]_g & C}    
    \end{gathered}
  \end{equation}
  is a commuting square of monomorphisms for which $p$ and $g$ have
  left inverses $p'$ and $g'$ (so that $p'\comp p = \id_D$ and
  $g'\comp g = \id_B$) making
  \begin{equation}
    \label{eq:13}
    \begin{gathered}
      \xymatrix{%
    D \ar[d]_q& A \ar[l]_{p'}\ar[d]^f\\
    B & C \ar[l]^{g'} } 
    \end{gathered}
  \end{equation}
  commute. Then \eqref{eq:11} is a pullback square. 
\end{lemma}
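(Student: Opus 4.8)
The plan is to verify the universal property of the pullback directly. Suppose we are given an object $T$ together with morphisms $u\colon T\morphism A$ and $v\colon T\morphism B$ satisfying $f\comp u = g\comp v$; I must produce a unique $w\colon T\morphism D$ with $p\comp w = u$ and $q\comp w = v$. Uniqueness is immediate and also pins down the only possible candidate: since $p'\comp p = \id_D$, any $w$ with $p\comp w = u$ must satisfy $w = p'\comp p\comp w = p'\comp u$. So I would set $w \defeq p'\comp u$ and check the two required equations.

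The equation $q\comp w = v$ is the routine one. Using the commutativity of \eqref{eq:13} in the form $q\comp p' = g'\comp f$, then the hypothesis $f\comp u = g\comp v$, and finally $g'\comp g = \id_B$, I compute
\[
q\comp w = (q\comp p')\comp u = g'\comp f\comp u = g'\comp g\comp v = v\,.
\]

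The equation $p\comp w = u$ is where the real work lies, and I expect it to be the main obstacle: I have only a \emph{left} inverse $p'$ of $p$, so $p\comp p'\comp u = u$ is not available on the nose. The trick is to compose on the left with the monomorphism $f$ and then cancel it. Using $f\comp p = g\comp q$ (commutativity of \eqref{eq:11}), then $q\comp p' = g'\comp f$, then $f\comp u = g\comp v$ together with $g'\comp g = \id_B$, I get
\[
f\comp p\comp w = g\comp q\comp p'\comp u = g\comp g'\comp f\comp u = g\comp g'\comp g\comp v = g\comp v = f\comp u\,.
\]
Since $f$ is a monomorphism, cancelling it on the left yields $p\comp w = u$, completing the verification. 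It is worth noting that this argument invokes only that $f$ is monic and that $p$ and $g$ admit left inverses (the equation $p'\comp p = \id_D$ already forces $p$ to be monic); the monicity of $q$ is not needed.
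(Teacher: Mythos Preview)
Your proof is correct and follows precisely the approach the paper intends: the paper leaves this as an exercise with the hint ``use the fact that $f$ is a monomorphism,'' and your argument does exactly that, defining $w = p'\comp u$ and cancelling $f$ to obtain $p\comp w = u$. Your observation that only the monicity of $f$ (not of $q$) is needed is also apt.
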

\begin{proof}
  Exercise! (Hint: use the fact that  $f$ is a monomorphism.)
\end{proof}

To apply this lemma to $\cat$, note that if the morphisms in
\eqref{eq:11} are all in $\cat[I]$, then they have left inverses in
$\cat$: given $f\in\cat[I](A,C)$, we can take
$f'\in\cat(C,A)=\Set(C,A+2)$ to be
\[
f'\,c \defeq
\begin{cases}
  a &\text{if $f\,a =c$ for some (unique) $a\in A$}\\
  0 &\text{if $a\in C-fA$.}
\end{cases}
\]

\begin{corollary}[Staton, Levy]
  \label{cor:1}
  Composing any functor $\cat\morphism\Set$ with the inclusion
  $i: \cat[I]\morphism\cat$ yields a pullback-preserving functor.
\end{corollary}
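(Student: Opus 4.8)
The plan is to show that for an arbitrary $F\in[\cat,\Set]$ the restriction $i^*F = F\comp i\colon \cat[I]\morphism\Set$ carries each pullback square of $\cat[I]$ to a pullback square of $\Set$. Since $\cat[I]$ has its pullbacks created by the inclusion $\cat[I]\morphism\Set$, such a pullback square is simply a commuting square of injections of the shape \eqref{eq:11}, in which one may take the apex $D$ to be the intersection $fA\cap gB\subseteq C$. The key idea is then to apply the Lemma a \emph{second} time, now inside $\Set$ and to the $F$-image of this square, rather than inside $\cat$ as in the discussion preceding the statement.

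The first step is to notice that the recipe for left inverses displayed just above applies to \emph{every} morphism of $\cat[I]$, not only to $f$: each of $p,q,f,g$, being an injection, is a \emph{split} monomorphism in $\cat$, with canonical left inverse sending everything outside its image to $0\in 2$. This is the engine of the argument. A functor need not preserve monomorphisms in general, but it always preserves split ones, so applying $F$ to the corresponding left-inverse identities $p'\comp p=\id$, $q'\comp q=\id$, $f'\comp f=\id$, $g'\comp g=\id$ shows that $Fp,Fq,Ff,Fg$ are all split monos, hence monos, in $\Set$. In particular $Ff$ is a monomorphism, which is exactly the hypothesis that the proof of the Lemma consumes.

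The second step is to verify that the canonical left inverses $p'$ and $g'$ can be chosen so that the companion square \eqref{eq:13} commutes in $\cat$, i.e.\ $q\comp p' = g'\comp f$. This is the one genuine computation, and I expect it to be the main obstacle: one unwinds the composition law \eqref{eq:7} on both sides and compares the two cases $a\in\im p$ and $a\notin\im p$, using that $D=fA\cap gB$ and that over this intersection the projections $p,q$ invert $f,g$ respectively. Granting this, functoriality of $F$ preserves the commutativity of both \eqref{eq:11} and \eqref{eq:13}; hence the $F$-image of the original square is a commuting square of monomorphisms in $\Set$ whose top and bottom edges $Fp$ and $Fg$ carry left inverses making the corresponding instance of \eqref{eq:13} commute. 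The Lemma, now applied in $\Set$, forces this image to be a pullback, which is precisely the assertion that $F\comp i$ preserves pullbacks.
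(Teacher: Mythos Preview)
Your proposal is correct and follows essentially the same route as the paper: observe that every $\cat[I]$-morphism is split mono in $\cat$, check that the canonical retractions make \eqref{eq:13} commute, then push everything through $F$ and invoke the Lemma in $\Set$. One small expository slip: the Lemma is not applied ``a second time'' --- the discussion preceding the statement merely constructs the retractions in $\cat$ without yet invoking the Lemma, so your application in $\Set$ is the first and only one.
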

\begin{proof}
  It is not hard to see that if \eqref{eq:11} is a pullback square in
  $\cat[I]$, then \eqref{eq:13} commutes in $\cat$. So applying any
  functor $\cat\morphism\Set$ to \eqref{eq:11} preserves the
  monomorphisms (because they all have left inverses) and gives a
  square in $\Set$ satisfying the hypotheses of the lemma -- hence
  which is a pullback.
\end{proof}

So we have the following picture:
\begin{equation}
  \label{eq:16}
  \begin{gathered}
    \xymatrix{%
      {I^*:[\cat,\Set]} \ar[dr]^{i^*} \ar@{-->}[r] & {\Sch}
      \ar@{^{(}->}[d] \ar@{}[r]|-{\simeq} & {\Nom}\\ 
      & {[\cat[I],\Set]} & }
  \end{gathered}
\end{equation}
where $\Sch$ is the full subcategory of pullback-preserving functors
and the equivalence $\Sch\simeq \Nom$ is described in
\cite[section~6.3]{PittsAM:nomsns}. From that description of the
equivalence we get the following explicit construction for the functor
$I^*:[\cat,\Set]\morphism \Nom$:

\begin{definition}[\textbf{the functor $I^*:{[\cat,\Set]}\morphism
    \Nom$}] 
  \label{def:1}
  Given $F\in [\cat,\Set]$, the nominal set $I^*F$ consists of
  equivalence classes $[A,x]$ of pairs $(A\in\cat, x\in F\,A)$ for the
  equivalence relation relating $(A,x)$ and $(A',x')$ when there is
  some $B\supseteq A\cup A'$ with $F(A\hookrightarrow B)\,x =
  F(A'\hookrightarrow B)\,x'$. The permutation action on equivalence
  classes is given by $\pi\act[A,x] = [\pi A, F(\pi|_A)\,x]$, where
  $\pi|_A\in\cat[I](A,\pi A)\subseteq \cat(A,\pi A)$ is the injective
  function that $\pi$ gives from the set $A$ to the set $\pi A=
  \{\pi\,a \mid a\in A\}$. It is not hard to see that $A$ supports
  $[A,x]$ with respect to this action, so that $I^*F$ is a nominal
  set.

  Given $\varphi:F\morphism F'$ in $[\cat,\Set]$, $I^*\varphi \in
  \Nom(I^*F,I^*F')$ is the function $I^*\varphi : [A,x] \mapsto
  [A,\varphi_A\,x]$, which is (well-defined and) equivariant because
  $\varphi_A$ is natural in $A$.
\end{definition}

\begin{remark}
  \label{rem:1}
  Since from Corollary~\ref{cor:1} we know that each $F\in
  [\cat,\Set]$ preserves the pullback
  \[
  \xymatrix{%
    {A\cap A'} \ar@{^{(}->}[r] \ar@{^{(}->}[d] & {A'}
    \ar@{^{(}->}[d]\\ 
    A \ar@{^{(}->}[r] & {A\cup A'}
  }
  \]
  the equivalence relation defining $I^*F$ relates $(A,x)$ and
  $(A',x)$ iff  there is some $y\in F(A\cap A')$
  with $F(A\cap A'\hookrightarrow A)\,y = x$ and $F(A\cap
  A'\hookrightarrow A')\,y = x'$.
\end{remark}

We will show that $I^*:[\cat,\Set]\morphism \Nom$ factors through the
forgetful functor $\OISub\morphism\Nom$ to give an equivalence of
categories.

\begin{definition}[\textbf{the $01$-substitution operation on $I^*F$}] 
  \label{def:2}
  Given $F\in[\cat,\Set]$ and $[A,x]\in I^*F$, for each $a\in\Atom$
  and $i\in 2$ we define
  \begin{equation}
    \label{eq:14}
    [A,x](a := i) \defeq [A-\{a\}, F(f_{A,a,i})\,x]
  \end{equation}
  where $f_{A,a,i}\in\cat(A,A-\{a\})$ is the morphism mapping $a$ to
  $i$ if $a\in A$ and otherwise acting like the identity. It is easy
  to see that this definition is independent of the choice of
  representative $(A,x)$. It is equivariant \eqref{eq:4} and satisfies
  property \eqref{eq:3} because the diagrams
  \[
  \xymatrix{%
    A \ar[r]^<<<<<<<{f_{A,a,i}} \ar[d]_{\pi|_A} & {A-\{a\}}
    \ar[d]^{\pi|_{A-\{a\}}} \\
    {\pi A}\ar[r]_<<<<<{f_{\pi A,\pi\,a,i}} & {\pi A- \{\pi\,a\}}
  }
  \text{and}
  \xymatrix{%
    A \ar[r]^<<<<<<<<<{f_{A,a,i}} \ar[d]_{f_{A,a',i'}} & {A-\{a\}}
    \ar[d]^{f_{A-\{a\},a',i'}} \\
    {A-\{a'\}} \ar[r]_>>>>>{f_{A-\{a'\},a,i}} & {A-\{a,a'\}}
    }
    (a\freshfor a')
  \]
  commute in $\cat$. Since $\supp[A,x] \subseteq A$, definition
  \eqref{eq:14} also satisfies property \eqref{eq:1}.  Finally, it
  remains to see that it also satisfies property \eqref{eq:2}. Note that 
  \begin{equation}
    \label{eq:15}
    a\notin A \;\imp\; [A,x](a:=i) = [A,x]
  \end{equation}
  because when $a\notin A$, then $A-\{a\}=A$ and $f_{A,a,i} =
  \id_A$. So if $a\freshfor [A,x]$, then picking any $a'\notin
  A\cup\{a\}$, we have $a'\freshfor[A,x]$ and hence
  $[A,x] = \swap{a}{a'}\act [A,x] = [\swap{a}{a'}A,
  F(\swap{a}{a'}|_A)\,x]$. Now $a\notin\swap{a}{a'}A$, so by
  \eqref{eq:15} $[\swap{a}{a'}A,
  F(\swap{a}{a'}|_A)\,x](a:=i) = [\swap{a}{a'}A,
  F(\swap{a}{a'}|_A)\,x]$; hence $[A,x](a:=i) = [A,x]$, as required
  for  \eqref{eq:2}. 
\end{definition}

Given $\varphi:F\morphism F'$ in $[\cat,\Set]$ and $[A,x]\in I^*F$,
using naturality of $\varphi$ and Definition~\ref{def:1} we can
calculate that
\begin{align*}
  I^*\varphi([A,x](a:=i)) &=
  [A-\{a\},\varphi_{A-\{a\}}(F(f_{A,a,i})\,x)]\\
  &= [A-\{a\}, F(f_{A,a,i})(\varphi_A\,x)]\\
  &= (I^*\varphi[A,x])(a:=i)\,.
\end{align*}
So each $I^*\varphi$ is a morphism in $\OISub$ and $I^*$ lifts to give
a functor $I^*:[\cat,\Set]\morphism \OISub$.

\begin{lemma}
  \label{lem:2}
  $I^*:[\cat,\Set]\morphism \OISub$ is a faithful functor.
\end{lemma}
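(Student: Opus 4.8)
The plan is to show that $I^*$ is faithful, meaning that if $\varphi, \psi : F \morphism F'$ are two natural transformations in $[\cat,\Set]$ with $I^*\varphi = I^*\psi$, then $\varphi = \psi$. Since a natural transformation between presheaves on $\cat^{\op}$ is determined by its components $\varphi_A : F\,A \morphism F'\,A$ at each object $A \in \cat$, it suffices to fix an arbitrary $A \in \cat$ and an arbitrary $x \in F\,A$ and prove that $\varphi_A\,x = \psi_A\,x$.

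Let me think about how $I^*$ acts. Given $\varphi$, we have $I^*\varphi([A,x]) = [A, \varphi_A\,x]$ by Definition~\ref{def:1}, so the hypothesis $I^*\varphi = I^*\psi$ gives us $[A,\varphi_A\,x] = [A,\psi_A\,x]$ in $I^*F'$ for every $A$ and every $x \in F\,A$. Unwinding the equivalence relation that defines $I^*F'$ (and using Remark~\ref{rem:1}, noting that here both pairs already have the same first component $A$), this equality means there is some $B \supseteq A$ with $F'(A\hookrightarrow B)(\varphi_A\,x) = F'(A\hookrightarrow B)(\psi_A\,x)$. The key difficulty, then, is that passing from $[A,\varphi_A\,x] = [A,\psi_A\,x]$ back to $\varphi_A\,x = \psi_A\,x$ is not immediate: the equivalence relation only tells us the two elements of $F'\,A$ agree after being pushed forward along some inclusion $A \hookrightarrow B$.

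So the main obstacle is precisely this: I must show that the function $F'(A\hookrightarrow B) : F'\,A \morphism F'\,B$ is injective, or at least that it does not identify the two particular elements in question. The plan is to exploit that the inclusion $A\hookrightarrow B$ is a morphism in $\cat[I]$ and hence, as shown in the discussion preceding Corollary~\ref{cor:1}, has a left inverse in $\cat$: there is $f' \in \cat(B,A)$ with $f' \comp (A\hookrightarrow B) = \id_A$. Applying the functor $F'$, functoriality gives $F'(A\hookrightarrow B)$ a left inverse $F'(f')$ in $\Set$, so $F'(A\hookrightarrow B)$ is a split monomorphism and in particular injective. Therefore from $F'(A\hookrightarrow B)(\varphi_A\,x) = F'(A\hookrightarrow B)(\psi_A\,x)$ we may apply $F'(f')$ to both sides and cancel, yielding $\varphi_A\,x = \psi_A\,x$.

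Since $A$ and $x$ were arbitrary, this shows all components of $\varphi$ and $\psi$ agree, hence $\varphi = \psi$, establishing faithfulness. I expect the only genuinely non-routine point to be the recognition that every inclusion in $\cat$ admits a left inverse (so that presheaves send inclusions to split, hence injective, maps); once that is in hand, the rest is a direct unwinding of the definition of $I^*$ on morphisms together with the definition of the equivalence relation on $I^*F'$.
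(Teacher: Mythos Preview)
Your proof is correct, but it takes a more concrete route than the paper's. The paper simply observes that since the inclusion $i:\cat[I]\morphism\cat$ is the identity on objects, the precomposition functor $i^*:[\cat,\Set]\morphism[\cat[I],\Set]$ is automatically faithful (the components of a natural transformation are unchanged by $i^*$); composing with the equivalence $\Sch\simeq\Nom$ and then lifting through the forgetful $\OISub\morphism\Nom$ preserves faithfulness. You instead unwind the definition of $I^*$ on morphisms and argue directly that $[A,\varphi_A\,x]=[A,\psi_A\,x]$ forces $\varphi_A\,x=\psi_A\,x$ by showing $F'(A\hookrightarrow B)$ is a split monomorphism. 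This is exactly the content of equation~\eqref{eq:17}, which the paper states and uses later in the fullness proof (deriving it instead from Remark~\ref{rem:1}: with both first components equal to $A$, that remark gives a $y\in F'(A\cap A)=F'A$ with $\varphi_A\,x=y=\psi_A\,x$ immediately, without the split-mono detour). So your argument is perfectly sound and in fact anticipates machinery needed for Lemma~\ref{lem:3}; the paper's version just appeals to a general categorical fact and defers the concrete computation.

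One small remark: your parenthetical citation of Remark~\ref{rem:1} is superfluous at the point where you invoke it---what you actually write down there is the \emph{original} defining equivalence relation, not the reformulation from that remark. Had you genuinely applied Remark~\ref{rem:1} with $A=A'$, you would have obtained $\varphi_A\,x=\psi_A\,x$ in one step and not needed the split-mono argument at all.
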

\begin{proof}
  Since $i:\cat[I]\morphism \cat$ is the identity on objects,
  $i^*:[\cat,\Set]\morphism[\cat[I],\Set]$ is a faithful functor and
  hence so is $I^*:[\cat,\Set]\morphism \Nom$ (\emph{cf.}~diagram
  \eqref{eq:16}). Therefore $I^*$ is faithful as a functor from
  $[\cat,\Set]$ t $\OISub$.
\end{proof}

\begin{lemma}
  \label{lem:3}
  $I^*:[\cat,\Set]\morphism \OISub$ is a full functor.
\end{lemma}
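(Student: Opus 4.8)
The plan is to show that every $\OISub$-morphism $g\in\OISub(I^*F, I^*F')$ arises as $I^*\varphi$ for some natural transformation $\varphi: F\morphism F'$. Since $I^*$ is already known to be faithful (Lemma~\ref{lem:2}), such a $\varphi$ is automatically unique, so the real content is \emph{existence}. First I would recall the concrete description of $I^*F$: its elements are classes $[A,x]$, and by Remark~\ref{rem:1} two pairs are identified via a common value in $F(A\cap A')$. The natural move is to reconstruct $\varphi_A: F\,A\morphism F'\,A$ componentwise from $g$. Given $x\in F\,A$, the element $[A,x]\in I^*F$ has support contained in $A$, and $g[A,x]\in I^*F'$ has support contained in $A$ as well (because $g$ is equivariant and support only shrinks under equivariant maps). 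So $g[A,x]$ can be represented as $[A, y]$ for some $y\in F'\,A$ — this is where the support bound does the work, letting me pick a representative on the \emph{same} object $A$. I would then \emph{define} $\varphi_A\,x \defeq y$, after checking this $y$ is well-defined (independent of the chosen representative of the class $g[A,x]$), presumably again via Remark~\ref{rem:1}.

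The two things to verify are then that each $\varphi_A$ is well-defined and that the family $(\varphi_A)_A$ is natural, i.e.\ commutes with $F(h)$ for every morphism $h\in\cat(A,B)$. For naturality I would split $\cat$-morphisms into the two generating kinds: the injections in $\cat[I]$, and the face maps $f_{A,a,i}$ that send a name to a constant $i\in 2$. Naturality against injections should follow from equivariance of $g$ together with the action $\pi\act[A,x]=[\pi A, F(\pi|_A)\,x]$ and the identification of restriction-along-inclusion via the equivalence relation; indeed this is exactly the information that the Schanuel-topos equivalence $\Sch\simeq\Nom$ already packages, so I expect to be able to lean on diagram~\eqref{eq:16}. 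Naturality against the face maps $f_{A,a,i}$ is precisely the statement that $g$ preserves the $01$-substitution operation: by Definition~\ref{def:2}, $F(f_{A,a,i})$ is what computes $[A,x](a:=i)$, and the hypothesis \eqref{eq:10} that $g$ is an $\OISub$-morphism gives $g([A,x](a:=i)) = (g[A,x])(a:=i)$, which should translate directly into $\varphi_{A-\{a\}}\comp F(f_{A,a,i}) = F'(f_{A,a,i})\comp\varphi_A$.

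The main obstacle I anticipate is the \emph{well-definedness} of $\varphi_A$ and the reconstruction of the same-object representative. The delicate point is that $[A,x]=[A,x'']$ in $I^*F$ does not force $x=x''$ in general — the map $F(A\hookrightarrow B)$ need not be injective for arbitrary $F$ — so I must be careful that the correspondence $x\mapsto y$ I extract from $g$ is genuinely a function on $F\,A$ and not merely on the quotient. Here I would want to know that the map $x\mapsto[A,x]$, while not injective in general, has its failure of injectivity controlled by $\cat$-morphisms into smaller objects, and then use that $g$, being natural through the $\OISub$ structure, respects exactly those identifications; this is really the assertion that $I^*$ reflects enough structure for a componentwise inverse to exist. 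I suspect the cleanest route is to verify fullness at the level of the equivalence $\Sch\simeq\Nom$ first — obtaining a natural transformation of the \emph{restricted} functors $i^*F\morphism i^*F'$ in $[\cat[I],\Set]$ from the underlying $\Nom$-morphism — and then to use the extra condition \eqref{eq:10} to promote that restricted natural transformation to one defined on all of $\cat$, checking only the face maps by hand. That promotion step, reconstructing the action on the non-injective face morphisms from the $01$-substitution data, is where I expect the proof to require the most care.
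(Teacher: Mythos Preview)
Your overall strategy --- define $\varphi_A\,x$ to be the unique $y$ with $g[A,x]=[A,y]$, then verify naturality using that $g$ is equivariant and preserves $01$-substitution --- is exactly the paper's. But the ``main obstacle'' you anticipate is illusory, and your worry about it leads you to propose an unnecessary detour through the Schanuel-topos equivalence.

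You write that ``$[A,x]=[A,x'']$ in $I^*F$ does not force $x=x''$ in general --- the map $F(A\hookrightarrow B)$ need not be injective for arbitrary $F$''. Both claims are false. Every inclusion $A\hookrightarrow B$ in $\cat[I]$ has a left inverse in $\cat$ (this is the observation used just before Corollary~\ref{cor:1}), so \emph{any} functor $F:\cat\to\Set$ sends it to a split monomorphism, hence an injection. Equivalently, apply Remark~\ref{rem:1} with $A'=A$: then $A\cap A'=A$, and $[A,x]=[A,x'']$ forces the existence of $y\in F\,A$ with $F(\id_A)y=x$ and $F(\id_A)y=x''$, i.e.\ $x=x''$. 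This is precisely the paper's equation~\eqref{eq:17}, and it makes $\varphi_A$ well-defined immediately --- no detour required. (The companion fact~\eqref{eq:18}, that every element of $I^*F$ supported by $A$ has a representative at $A$, does need a short argument: one substitutes away the surplus names using property~\eqref{eq:2}.)

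For naturality, your plan to check generators (injections and face maps) would work, but the paper handles an arbitrary $f\in\cat(A,B)$ in one step: choose a permutation $\pi$ agreeing with $f$ on $f^{-1}B$, let $a_1,\ldots,a_n$ enumerate $A-f^{-1}B$, and observe that
\[
[B,(F\,f)\,x] \;=\; \pi\act\bigl([A,x](a_1:=f\,a_1)\cdots(a_n:=f\,a_n)\bigr),
\]
so naturality of $\varphi$ with respect to $f$ follows directly from $g$ being an $\OISub$-morphism. This is slightly slicker than a generator argument because it avoids having to argue separately that the face maps and injections generate $\cat$ and that naturality is closed under composition.
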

\begin{proof}
  First note that in view of Remark~\ref{rem:1} we
  have for any $F\in[\cat,\Set]$ that 
  \begin{equation}
    \label{eq:17}
    [A,x]=[A,x']\in I^*F \;\imp\; x=x' \in F\,A\,.
  \end{equation}
  Furthermore
  \begin{equation}
    \label{eq:18}
    (\forall d\in I^*F)\; \supp d \subseteq A \;\imp\; (\exists x\in
    F\,A)\; d = [A,x]\,.
  \end{equation}
  For if $\supp [B,y] \subseteq A$, letting $b_1,\ldots,b_n$ be the
  distinct elements of $B-A$, then $b_i\freshfor [B,y]$, so by
  property \eqref{eq:2} for the $01$-substitution set $I^*F$ and
  \eqref{eq:14} we have
  \[
  [B,y] = [B,y](b_1:=0)\cdots(b_n:=0) = [B-\{b_1,\ldots,b_n\}, y'] =
  [A,F(A\cap B\hookrightarrow A)\,y']
  \]
  for some $y'\in F(B-\{b_1,\ldots,b_n\}) =F(A\cap B)$.

  So now suppose $F,F'\in[\cat,\Set]$ and $g\in\OISub(I^*F,
  I^*F')$. For each $A\in\cat$ and $x\in F\,A$, since $g$ is
  equivariant we have $\supp(g[A,x])
  \subseteq \supp[A,x] \subseteq A$. So by \eqref{eq:18}, there is
  some $\varphi_A\,x \in F'A$ with $g[A,x] = [A, \varphi_A\,x]$; and
  by \eqref{eq:17}, $\varphi_A\,x$ is uniquely determined from $A$ and
  $x$ by this property. So we get functions $\varphi_A:F\,A\morphism
  F'A$ for each $A\in\cat$. If we can prove they are natural in $A$,
  then $\varphi\in[\cat,\Set](F,F')$; and $I^*\varphi = g$ by
  construction, as required for fullness. 

  To prove naturality we have to express the $F$ and $F'$ action of an
  arbitrary morphism $f\in\cat(A,B)$ in terms of permutation action
  and $01$-substitution. Note that because of~\eqref{eq:5}, $f$
  restricts to a bijection between $f^{-1}B$ and $f(f^{-1}B)$. Pick a
  finite permutation $\pi\in \Perm\Atom$ that agrees with $f$ on
  $f^{-1}B$ and which is the identity outside the finite set $f^{-1}B
  \cup f(f^{-1}B)$. (We can always find such a $\pi$ -- see the
  Homogeneity Lemma~1.14 in \cite{PittsAM:nomsns}.) Let
  $a_1,\ldots,a_n$ list the distinct elements of $A-f^{-1}B$. Then for
  any $x\in F\,A$
  \[
  [B,(F\,f)\,x] = \pi\act([A,x](a_1:=f\,a_0)\cdots(a_n:= f\,a_n))
  \]
  and similarly for $x'\in F'A$. Therefore since $g$ is a morphism of
  $01$-substitution sets we get:
  \begin{align*}
    [B,\varphi_B((F\,f)\,x)] &= g[B, (F\,f)\,x]\\
    &= g(\pi\act([A,x](a_1:=f\,a_1)\cdots(a_n:= f\,a_n)))\\
    &= (\pi\act (g[A,x]))(a_1:=f\,a_1)\cdots(a_n:= f\,a_n)))\\
    &= (\pi\act [A, \varphi_A\,x]) (a_1:=f\,a_1)\cdots(a_n:=
    f\,a_n)))\\
    &= [B,(F'f)(\varphi_A\,x)]
  \end{align*}
  and hence by \eqref{eq:17} we do indeed have $\varphi_B\comp (F\,f)
  = (F'f)\comp \varphi_A$.
\end{proof}

\begin{theorem}
  \label{thm:1}
  $I^*:[\cat,\Set]\morphism \OISub$ is an equivalence of categories.
\end{theorem}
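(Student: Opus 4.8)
The functor $I^*$ has already been shown to be faithful (Lemma~\ref{lem:2}) and full (Lemma~\ref{lem:3}), so by the standard characterisation of equivalences it remains only to prove that $I^*$ is essentially surjective on objects: given an arbitrary $01$-substitution set $X$, I must exhibit some $F\in[\cat,\Set]$ together with an isomorphism $I^*F\bij X$ in $\OISub$. The plan is to read off the definition of $F$ by inverting the computation carried out at the end of the proof of Lemma~\ref{lem:3}, which expressed the action of an arbitrary $\cat$-morphism as a composite of $01$-substitutions followed by a permutation.

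Concretely, I would set
\[
F\,A \defeq \{x\in X \mid \supp x \subseteq A\}
\]
for each object $A\in\cat$ (this is exactly the functor on $\cat[I]$ corresponding to $X$ under the equivalence $\Sch\simeq\Nom$ of \cite[section~6.3]{PittsAM:nomsns}), and for a morphism $f\in\cat(A,B)$, listing the distinct elements of $A-f^{-1}B$ as $a_1,\ldots,a_n$, I would define
\[
(F\,f)\,x \defeq \pi\act\bigl(x(a_1 := f\,a_1)\cdots(a_n := f\,a_n)\bigr)
\]
where $\pi\in\Perm\Atom$ is any finite permutation agreeing with $f$ on $f^{-1}B$ and fixing everything outside $f^{-1}B\cup f(f^{-1}B)$, exactly as chosen in Lemma~\ref{lem:3}. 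Here each $f\,a_j$ lies in $2$, so the substitutions make sense; and since by \eqref{eq:1}, \eqref{eq:3} and the equivariance of $s$ the element $y\defeq x(a_1 := f\,a_1)\cdots(a_n := f\,a_n)$ is independent of the chosen ordering and satisfies $\supp y\subseteq f^{-1}B$, the value $\pi\act y$ depends only on the restriction of $\pi$ to $\supp y$, hence only on $f$; this gives both the well-definedness of $F\,f$ and its independence of the auxiliary permutation $\pi$.

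It then remains to check three things. First, that $F$ is genuinely a functor: preservation of identities is immediate, since $\id_A$ contributes no substitutions and admits $\pi=\id$, while preservation of composition requires unpacking the composite $g\comp f$ of \eqref{eq:7} and reconciling the two layers of substitutions-then-permutation against one another, using \eqref{eq:1}--\eqref{eq:4} to commute substitutions past each other and past permutations. Second, that the assignment $[A,x]\mapsto x$ is a well-defined bijection $I^*F\morphism X$: well-definedness and injectivity follow from Remark~\ref{rem:1} together with the fact that the $F$-action of an inclusion in $\cat[I]$ is the identity, while surjectivity follows by sending $x$ to $[\supp x,x]$. Third, that this bijection is equivariant and preserves the $01$-substitution operation, which is a direct comparison of Definition~\ref{def:2} with the operation on $X$.

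The main obstacle is the verification of functoriality, specifically preservation of composition. The difficulty is that $\cat$-composition \eqref{eq:7} treats an element $a\in A$ differently according to whether $f\,a\in B$ or $f\,a\in 2$, and whether $g(f\,a)\in C$ or $g(f\,a)\in 2$, so the single list of substitutions arising from $g\comp f$ must be matched against the two successive lists arising from $f$ and then $g$ --- with the intervening permutation $\pi$ for $f$ having to be pushed through the substitutions for $g$ by means of \eqref{eq:4}, which shifts the name being substituted from $a$ to $\pi\,a=f\,a$. Managing this bookkeeping while repeatedly invoking the commutativity axiom \eqref{eq:3} and the fact \eqref{eq:2} that substituting a fresh name is trivial is where the real work lies; the remaining two verifications are routine.
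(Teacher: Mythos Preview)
Your proposal is correct and follows essentially the same route as the paper: you define the inverse functor by $F\,A=\{x\in X\mid\supp x\subseteq A\}$ with the morphism action given by substitutions followed by a permutation, and you construct the isomorphism $[A,x]\mapsto x$ just as the paper does. You have in fact been more explicit than the paper about the difficulty in verifying preservation of composition --- the paper itself flags this step as ``very tedious'' and leaves it unchecked --- so your identification of this as the main obstacle is exactly right.
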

\begin{proof}
  In view of Lemmas~\ref{lem:2} and \ref{lem:3}, it suffices to check
  that $I^*$ is essentially surjective, that is, for each $X\in
  \OISub$ there is some $I_*X\in [\cat,\Set]$ and an isomorphism
  $\varepsilon_X:I^*(I_*X) \bij X$ in $\OISub$. 

  Given $X\in\OISub$,  for each $A\in\cat$ define 
  \[
  I_*\,X\,A \defeq \{x \in X \mid \supp x \subseteq A\} \in \Set\,.
  \]
  Then for each $f\in\cat(A,B)$, we wish to construct a function
  $I_*\,X\,f\in\Set(I_*\,X\,A,I_*\,X\,B)$. Given $f$, picking $\pi$
  and $a_1,\ldots,a_n$ as in the proof of Lemma~\ref{lem:3}, for each
  $x\in I_*\,X\,A$ we define
  \begin{equation}
    \label{eq:8}
    I_*\,X\,f\,x \defeq \pi\act (x(a_1 := f\,a_1)\cdots(a_n := f\,a_n))\,.
  \end{equation}
  (In the case $n=0$, we take $x(a_1 := f\,a_0)\cdots(a_n := f\,a_n)$ to
  just mean $x$.) Note that since $\supp x\subseteq A$ and using
  \eqref{eq:1}, we have
  \begin{equation}
    \label{eq:9}
    \supp(x(a_1 := f\,a_1)\cdots(a_n := f\,a_n)) \subseteq A -
    \{a_1,\ldots,a_n\} = f^{-1}B
  \end{equation}
  and that by choice of $\pi$, $\pi(f^{-1}B) = f(f^{-1}B) \subseteq
  B$. So the support of the element on the right-hand side of
  \eqref{eq:8} is contained in $B$ and hence it is an element of
  $I_*\,X\,B$. In view of \eqref{eq:9}, the  right-hand side of
  \eqref{eq:8} is independent of the choice of $\pi$; and by
  \eqref{eq:3} it independent of the order in which the elements of
  $A-f^{-1}B$ are listed. So \eqref{eq:8} gives a well-defined function 
  $I_*\,X\,f\in\Set(I_*\,X\,A,I_*\,X\,B)$. One can check that
  $f\mapsto I_*\,X\,f$ preserves identities and composition
  \texttt{[the proof for composition seems very tedious -- I have not
    checked it properly]} and so we get
  $I_*\,X \in [\cat,\Set]$. 

  Note that when $f$ is an inclusion $A\hookrightarrow B$, then in
  \eqref{eq:8} we can take $\pi=\id$ and $n=0$, so that
  \begin{equation}
    \label{eq:19}
    \supp x\subseteq A \;\imp\;I_*\,X(A\hookrightarrow B)\,x = x\,.
  \end{equation}
  If $(A,x)$ and $(A',x')$ both represent the same element of
  $I^*(I_*X)$, then for some $x''\in I_*\,X\,(A\cap A')$ we have
  \[
  x = I_*\,X\,(A\cap A'\hookrightarrow A)\,x'' \quad\text{and}\quad
  x' = I_*\,X\,(A\cap A'\hookrightarrow A')\,x''
  \]
  so that by \eqref{eq:19}, $x=x''=x'$. Therefore we get a
  well-defined function $\varepsilon_X:I^*(I_*\,X) \morphism X$
  satisfying
  \begin{equation}
    \label{eq:21}
    (\forall A\in\cat, x\in I_*\,X\,A)\; \varepsilon_X[A,x] = x\,.
  \end{equation}
 It follows immediately that $\varepsilon_X$ is a bijection. So it
 just remains to check that is it also a morphism in $\OISub$.

 To see that it is equivariant, note that in \eqref{eq:8} when
 $f=\pi|_A$ we have $n=0$ and
  \begin{equation}
    \label{eq:20}
    I_*\,X\,(\pi|_A)\, x = \pi\act x
  \end{equation}
  so that
  \begin{align*}
    \pi\act(\varepsilon_X[A,x]) &= \pi\act x\\
    &= I_*\,X\,(\pi|_A)\, x\\
    &= \varepsilon_X [\pi A, I_*\,X\,(\pi|_A)\, x]\\ 
    &= \varepsilon_X(\pi\act[A,x])\,. 
  \end{align*}
Finally, to see that $\varepsilon_X$
  also preserves the $01$-substitution operation, note that in
  \eqref{eq:8} when $f = f_{A,a,i}$ (Definition~\ref{def:2}), then we
  can take $\pi=\id$, $n=1$ and $a_1=a$ and get
  \begin{equation}
    \label{eq:22}
    I_*\,X\,(f_{A,a,i})\,x = x(a:=i)
  \end{equation}
  and hence
  \begin{align*}
    \varepsilon_X([A,x](a:=i)) &= \varepsilon_X[A-\{a\},
    I_*\,X\,(f_{A,a,i})\,x]\\ 
    &= \varepsilon_X[A-\{a\}, x(a:=i)]\\
    &= x(a:=i)\\
    &= (\varepsilon_X[A,x])(a:=i)\,.
  \end{align*}
\end{proof}

\begin{remark}
  An immediate corollary of the theorem is that $\OISub$ is a
  Grothendieck topos. In fact
  Staton~\cite[section~6.4]{StatonS:namppc} has shown that for a quite
  general notion of `substitution action', categories of nominal sets
  equipped with such actions are all Grothendieck toposes.
\end{remark}

\section{The uniform-Kan condition}

\begin{definition}[\textbf{open boxes}]
  \label{def:opeb}
  Given a non-empty finite subset $A\subseteq_{\mathrm{fin}}\Atom$
  with a distinguished element $a\in A$, an \emph{$1$-open $(A,a)$-box} in
  a $01$-substitution set $X\in\OISub$ is a function
  \[
  u:(A \times 2)-\{(a,1)\} \morphism X
  \]
  satisfying for all $(b,i),(b',i')\in (A\times 2) - \{(a,1)\}$
  \begin{gather}
    b\freshfor u(b,i)\label{eq:28}\\
    u(b,i)(b':=i') = u(b',i')(b:=i)\,.\label{eq:29}
  \end{gather}
  Note that any $x\in X$ gives rise to a $1$-open $(A,a)$-box $u_x$
  with $u_x(b,i) = x(b:=i)$ for all $(b,i)\in (A\times 2) -
  \{(a,1)\}$. We call $x$ a \emph{filling} for the $1$-open
  $(A,a)$-box $u$ if $u=u_x$. Reversing the role of $0$ and $1$ in
  these definitions, we get the notion of \emph{$0$-open $(A,a)$-boxes}
  and their fillings.
\end{definition}

If $c\freshfor A$ and $j\in 2$, then we get another $1$-open
$(A,a)$-box $u(c:=j)$ mapping each $(b,i)\in (A \times 2)-\{(a,1)\}$
to
\begin{equation}
  \label{eq:30}
  (u(c:=j))\,(b,i) = u(b,i)(c:=j)\,.
\end{equation}
Note also that, using the usual permutation action on functions
\begin{equation}
  \label{eq:36}
  \pi\act u = \lambda x \TO \pi\act(u\,(\pi^{-1}\act x))
\end{equation}
if $u$ is a $1$-open $(A,a)$-box, then $\pi\act u$ is a $1$-open
$(\pi\,A,\pi\,a)$-box.

\begin{definition}[\textbf{uniform-Kan objects in $\OISub$}]
  \label{def:unikss}
  A $01$-substitution set $X\in\OISub$ is \emph{uniform-Kan} if it
  comes equipped with operations mapping $1$-open (respectively
  $0$-open) $(A,a)$-boxes $u$ in $X$ for any $(A,a)$, to fillings
  $\FillI u$ (respectively $\FillO u$) in $X$. These operations are
  required to be equivariant
  \begin{equation}
    \label{eq:35}
    \begin{aligned}
      \pi\act\FillI u &= \FillI(\pi\act u) &&\text{if $u$ is
        $1$-open}\\
      \pi\act\FillO u &= \FillO(\pi\act u) &&\text{if $u$ is
        $0$-open}
    \end{aligned}
  \end{equation}
  and to commute with substitution in the sense that if $u$ is an open
  ($A,a)$-box, $c\freshfor A$ and $j\in 2$, then
  \begin{equation}
    \label{eq:32}
    \begin{aligned}
      (\FillI u)(c:=j) &=  \FillI(u(c:=j)) &&\text{if $u$ is
        $1$-open}\\
      (\FillO u)(c:=j) &=  \FillO(u(c:=j)) &&\text{if $u$ is
        $0$-open.}
    \end{aligned} 
  \end{equation}
  \end{definition}

  \begin{note}
    If $X$ is uniform-Kan, the filling operation $\FillI$ gives rise to
    an operation $u\mapsto {}^{+}u$ sending a $1$-open
    $(A,a)$-box $u$ to the $1$-face of its filling $\FillI u$ that is
    orthogonal to the distinguished dimension $a$:
    \begin{equation}
      \label{eq:33}
      {}^{+}u = (\FillI u)(a:=1)\,.
    \end{equation}
    Similarly we get an operation $u\mapsto {}^{-}u$ sending a $0$-open
    $(A,a)$-box $u$ to the $0$-face of its filling:
    \begin{equation}
      \label{eq:61}
      {}^{-}u = (\FillO u)(a:=0)\,.
    \end{equation}
  \end{note}

\begin{note}
  The above definition of uniform-Kan can be reformulated in a less
  `nominal' fashion by making use of name
  abstraction~\cite[chapter~4]{PittsAM:nomsns}, as follows (but it is
  not clear that is any more useful when formulated that way):

  Given $X\in\OISub$, the $01$-substitution operation on $X$ lifts to
  the nominal set $\Abs{\Atom}X$ of name
  abstractions, where it satisfies
  \begin{equation}
    \label{eq:23}
    a\freshfor a' \;\imp\; (\abs{a}x)(a':=i) = \abs{a}(x(a':=i)) 
  \end{equation}
  and hence gives a $01$-substitution operation for $\Abs{\Atom}X$.
  Denote the resulting object of $\OISub$ by $\Cbox X$. Iterating this
  construction, define
  \begin{equation}
    \label{eq:24}
    \begin{cases}
      \Cbox_0X &= X\\
      \Cbox_{n+1}X &= \Cbox(\Cbox_nX)\,.
    \end{cases}
  \end{equation}
  Thus $\Cbox_nX$ is the nominal set of $n$-ary name abstractions
  $\abs{a_1,\dots,a_n}x$ (with $a_1,\ldots,a_n$ mutually distinct),
  with $01$-substitution operation satisfying the evident
  generalization of \eqref{eq:23} to $n$-ary name abstractions.
  
  We can think of the elements of $\Cbox X$ as \emph{intervals} in $X$: given
  $\abs{a}x\in\Cbox X$ its endpoints in $X$ are $\delta^1_0\abs{a}x = x(a:=0)$
  and $\delta^1_1\abs{a}x = x(a:=0)$. Note that the functions
  $\delta^1_0$ and $\delta^1_1$ are morphisms in $\OISub$ from $\Cbox X$
  to $X$. In general at higher dimensions, think of the elements of
  $\Cbox_nX$ as \emph{$n$-cubes} in $X$. There are face morphisms
  \begin{equation}
    \label{eq:25}
    \begin{gathered}
      \delta^m_i \in \OISub(\Cbox_n X, \Cbox_{n-1}X) \qquad(1\leq m\leq n,
      i=0,1)\\
      \delta^m_i\abs{a_1,\ldots,a_n}x =
      \abs{a_1,\ldots,a_{m-1},a_{m+1},\ldots,a_n}x(a_m:=i) 
    \end{gathered}
  \end{equation}
  and degeneracy morphisms
  \begin{equation}
    \label{eq:26}
    \begin{gathered}
      \iota^m\in\OISub(\Cbox_n X,\Cbox_{n+1}X) \qquad (0\leq m\leq n)\\
      \iota^m \abs{a_1,\ldots,a_n}x =
      \begin{array}[t]{@{}l}
        \abs{a_1,\ldots,a_{m-1},a,a_{m+1},\ldots,a_n}x\\
        \quad\text{for some/any $a\freshfor(a_1,\ldots,a_n,x)$.}
      \end{array}
    \end{gathered}
  \end{equation}
  We get a $01$-substitution set $\OboxI_nX$ of $1$-open boxes in $X$
  of dimension $n+1$: it elements are $(n+1)$-ary name abstractions
  $\abs{a\vec{a}}u$ where $a\vec{a}$ are $n+1$ distinct names and $u$
  is a $1$-open $(\{a\vec{a}\},a)$-box. Recalling from
  Definition~\ref{def:opeb} that each $x\in X$ gives a $1$-open
  $(\{a\vec{a}\},a)$-box $u_x$, we get a morphism in $\OISub$:
  \begin{equation}
    \label{eq:34}
    \begin{gathered}
      p_n\in \OISub(\Cbox_{n+1}X,\OboxI_nX)\\
      p_n(\abs{a\vec{a}}x) = \abs{a\vec{a}}u_x
    \end{gathered}
  \end{equation}
  Symmetrically, there is a $01$-substitution set $\OboxO_n X$ of open
  $0$-boxes in $X$ of dimension $n+1$, together with a morphism
  $q_n\in \OISub(\Cbox_{n+1}X,\OboxO_nX)$.

  Then $X$ is uniform-Kan iff each $p_n$ and each $q_n$ is split, that
  is, there are morphisms
  \[
  i_n\in\OISub(\OboxI_nX,\Cbox_{n+1}X) \quad\text{and}\quad
  j_n\in\OISub(\OboxO_nX,\Cbox_{n+1}X)
  \]
  with $p\comp i = \id_{\OboxI_nX}$ and $q\comp j = \id_{\OboxO_nX}$.
\end{note}

Definitions~\ref{def:opeb} and \ref{def:unikss} generalize to
indexed families in $\OISub$ as follows:
\begin{definition}[\textbf{uniform-Kan fibrations}]
  \label{def:unikf}
  Given $p:X\morphism Y$ in $\OISub$, a $1$-open $(A,a)$-box in $X$
  \emph{lies over} $y\in Y$ if $p(u(b,i)) = y(b:=i)$ for all $(b,i)\in
  (A\times 2)-\{(a,1)\}$. Such a $u$ has a \emph{filling over $y$} if
  it has a filling $x\in X$ with $p\,x=y$. Note that if $u$ is a
  $1$-open $(A,a)$-box over $y$, then $\pi\act u$ is a $1$-open
  $(\pi\,A,\pi\,a)$-box over $\pi\act y$; and if $c\freshfor(A,y)$ and
  $j\in 2$, then $u(c:=j)$, defined as in \eqref{eq:30}, is also
  $1$-open $(A,a)$-box over $y$ (since $y(c:=j)= y$).

  Then $p:X\morphism Y$ is a \emph{uniform-Kan fibration} if for each
  $y\in Y$ there are operations mapping any $1$-open (respectively,
  $0$-open) box $u$ over $y$ to an element $\FillI u$ (respectively
  $\FillO u$) in $X$ that is filling over $y$ for $u$; furthermore,
  the operations $\FillI$ and $\FillO$ are required to be
  equivariant~\eqref{eq:35} and commute with
  substitutions~\eqref{eq:32} for names $c$ satisfying not only
  $c\freshfor A$, but also $c\freshfor y$.
\end{definition}


\end{document}